\numberwithin{equation}{section}
\newtheorem{definition}{Definition}[section]
\newtheorem{lemma}{Lemma}[section]
\newtheorem{theorem}{Theorem}[section]
\newtheorem{corollary}{Corollary}[section]
\newcommand{\bbC}{\mathbb{C}}
\newcommand{\betheop}{\hat{\beta}}
\renewcommand{\l}{\lambda}
\newcommand{\bl}{\boldsymbol{\lambda}} 
\renewcommand{\L}{\Lambda}
\newcommand{\bmu}{\boldsymbol{\mu}} 
\renewcommand{\o}{\omega}
\renewcommand{\O}{\Omega}
\newcommand{\pop}{\hat{p}}
\newcommand{\tr}{\operatorname{Tr}}
\newcommand{\cH}{\ensuremath{\mathcal{H}}}
\newcommand{\ox}{\otimes}
\newcommand{\LI}{\ensuremath{\underset{1}{L}}}
\newcommand{\LII}{\ensuremath{\underset{2}{L}}}
\newcommand{\xp}{X^{+}}
\newcommand{\xm}{X^{-}}
\begin{document}

\title{\LARGE \textbf{Algebraic Bethe Ansatz for deformed Gaudin model}}

\author{\textsf{N. ~Cirilo ~Ant\'onio,}
\thanks{E-mail address: nantonio@math.ist.utl.pt}
\textsf{ ~~N. ~Manojlovi\'c}
\thanks{E-mail address: nmanoj@ualg.pt}
\textsf{ and A. ~Stolin}
\thanks{E-mail address: 
astolin@chalmers.se} \\
\\
\textit{$^{\ast}$Centro de An\'alise Funcional e Aplica\c{c}\~oes}\\
\textit{Departamento de Matemática, Instituto Superior Técnico} \\
\textit{Av. Rovisco Pais, 1049-001 Lisboa, Portugal} \\
\\
\textit{$^{\dag}$Grupo de F\'{\i}sica Matem\'atica da Universidade de Lisboa} \\
\textit{Av. Prof. Gama Pinto 2, PT-1649-003 Lisboa, Portugal} \\
\\
\textit{$^{\dag}$Departamento de Matem\'atica, F. C. T.,
Universidade do Algarve}\\ 
\textit{Campus de Gambelas, PT-8005-139 Faro, Portugal}\\
\\
\textit{$^{\ddag}$Departament of Mathematical Sciences}\\
\textit{University of G\"oteborg, SE-412 96 G\"oteborg, Sweden}}
\date{}


\maketitle
\thispagestyle{empty}
\vspace{10mm}
\begin{abstract}
The Gaudin model based on the $sl_{2}$-invariant r-matrix with an extra Jordanian term depending on the spectral parameters is considered. The appropriate creation operators defining the Bethe states of the system are constructed through a recurrence relation. The commutation relations between the generating function $t(\l)$ of the integrals of motion and the creation operators are calculated and therefore the algebraic Bethe Ansatz is fully implemented. The energy spectrum as well as the corresponding Bethe equations of the system coincide with the ones of the $sl_{2}$-invariant Gaudin model. As opposed to the $sl_{2}$-invariant case, the operator $t(\l)$ and the Gaudin Hamiltonians are not Hermitian. Finally, the inner products and norms of the Bethe states are studied.
\end{abstract}

\newpage
\section{Introduction}

A model of interacting spins in a chain was first considered by Gaudin \cite{Gaudin,Gaudin2}. In his approach these models were introduced as a quasi-classical limit of the integrable quantum chains. Moreover, the Gaudin models were extended to any simple Lie algebra, with arbitrary irreducible representation at each site of the chain 
\cite{Gaudin2}.

In the framework of the Quantum Inverse Scattering Method (QISM) \cite{SkTF}\nocite{KS82,STS,KBI}--\cite{Faddeev} the Gaudin models are based on classical r-matrices which have a unitarity property 
\begin{equation}
\label{r-unitarity}
r_{12}(\mu,\nu) = -r_{21}(\nu,\mu).
\end{equation}
Then the Gaudin Hamiltonians are related to classical r-matrices  
\begin{equation}
\label{Gaudinham}
H^{(a)} = \sum_{b\neq a}^{N} r_{ab}(z_{a},z_{b}).
\end{equation}
The condition of their commutativity $[H^{(a)},H^{(b)}]=0$ is nothing else but the classical Yang-Baxter equation 
\begin{equation}
\label{cYBE}
[ r_{ab}(z_{a},z_{b}), r_{ac}(z_{a},z_{c}) + r_{bc}(z_{b},z_{c})] + [r_{ac}(z_{a},z_{c}),r_{bc}(z_{b},z_{c})]=0
\end{equation}
where $r$ belongs to the tensor product $\mathfrak{g} \otimes \mathfrak{g}$ of a Lie algebra $\mathfrak{g}$, or its
representations, and the indices fix the corresponding factors in the $N$-fold tensor product. 

Considerable attention has been devoted to Gaudin models based on the classical r-matrices of simple Lie algebras \cite{Sklyanin89}\nocite{Sklyanin99,BF,FFR94}--\cite{RV95}  and Lie superalgebras \cite{KM2, KM3, LSU01}. The spectrum and corresponding eigenfunctions were found using different methods such as coordinate and algebraic Bethe ansatz, separated variables, etc. Correlation functions for Gaudin models were explicitly calculated as combinatorial expressions obtained from the Bethe ansatz \cite{Gaudin2}. In the case of the $sl_2$ Gaudin system, its relation to Knizhnik-Zamolodchikov equation of conformal field theory \cite{RV95} or the method of Gauss factorization \cite{Sklyanin99}, provided alternative approaches to computation of correlations. The non-unitary r-matrices and the corresponding Gaudin models have been studied recently, see \cite{Skrypnyk} and the references therein.


A classification of solutions to classical Yang-Baxter equation \eqref{cYBE}, with the property \eqref{r-unitarity}, has been done in \cite{BelavinDrinfeld}. In particular, a class of rational solutions corresponding to the simple Lie algebras $sl_{n}$ was discussed in \cite{BelavinDrinfeld} and then further studied and generalized in \cite{Stolin1,Stolin}. In the $sl_{2}$ case, up to gauge equivalence, only two such solutions exist: the $sl_{2}$-invariant r-matrix with an extra Jordanian term (deformation) 
\begin{equation}
\label{rJordan}
r^{J}_{\xi}(\mu,\nu) = \frac{c_{2}^{\otimes}}{\mu-\nu} +\xi (h\otimes \xp - \xp\otimes h),
\end{equation}
where $c_{2}^{\ox}= h \otimes h + 2 \left( X^+ \otimes X ^- +  X^-\otimes X ^+\right)$ denotes the quadratic tensor Casimir of $sl_{2}$, and the $sl_{2}$-invariant r-matrix with a deformation depending on the spectral parameters 
\begin{equation}
\label{defJordan}
r_{\xi}(\mu,\nu) = \frac{c_{2}^{\otimes}}{\mu-\nu} +\xi \left(h\otimes (\nu \xp) - (\mu \xp)\otimes h\right).
\end{equation}

The Gaudin model based on the r-matrix \eqref{rJordan} was studied as the quasi-classical limit of the corresponding quantum spin chain \cite{KSto, KST}, which is a deformation of the XXX spin chain where the Jordanian twist  is applied to the Yang R-matrix \cite{Kulish2000,LMO}. The algebraic Bethe Ansatz for this Gaudin model was fully implemented in \cite{CAM1}, following ideas given in \cite{KM2, KM3, Kulish2002}.

In the framework of the QISM, the classical Yang-Baxter equation guarantees for the Gaudin model the existence of integrals of motion in involution. A more difficult problem is to determine the spectra and the corresponding eigenvectors. The Algebraic Bethe Ansatz (ABA) is a powerful tool which yields the spectrum and corresponding eigenstates in the applications of the QISM to the models for which highest weight type representations are relevant, like for example Gaudin models, quantum spin systems, etc \cite{TakhtajanFaddeev, Faddeev, SklyaninTakebe}. However the ABA has to be applied on the case by case basis.

An initial step in the application of the Algebraic Bethe Ansatz to the Gaudin model based on the r-matrix \eqref{defJordan}, is to define appropriate creation operators which yield the Bethe states and the spectrum of the corresponding Gaudin Hamiltonians. 
The operators $B^{(k)}_{M}(\mu_{1},\ldots,\mu_{M})$ used for this model, introduced in \cite{KMSS}, are defined through a recursive relation that generalizes the one used in \cite{CAM1} for the Gaudin model based on the r-matrix \eqref{rJordan}. The commutation relations of the $B^{(k)}_{M}(\mu_{1},\ldots,\mu_{M})$ operators with the entries of the L-operator are straightforward to calculate, see lemma \ref{lem:2}. The main step, lemma \ref{lem:4} below,  in the implementation of the ABA is to calculate explicitly the commutation relations between the creation operators $B_{M}(\mu_{1},\ldots,\mu_{M})$ and the generating function of integrals of motion $t(\l)$ of this Gaudin model. 
Finally, it is confirmed that the operator $t(\l)$ is not Hermitian with respect to the natural inner product.

Besides determining the spectrum of the system, via ABA, another important problem is to calculate the inner product of the Bethe vectors. To this end it is necessary to introduce the duals of the creation operators $B^{\ast}_{M}(\mu_{1},\ldots,\mu_{M})$. 
A method developed in \cite{CAM1} to determine the inner product of the Bethe vectors will be used here. The main idea is to establish a relation between the creation operators of the system and the ones in the $sl_2$-invariant case, \textit{i.e.} when $\xi=0$. The advantage of this approach is that the cumbersome calculation involving strings of commutators between entries of the L-operator becomes unnecessary. As opposed to the $sl_{2}$-invariant case, the Bethe vectors $\Psi_{M}(\mu_{1},\ldots,\mu_{M})$ for the deformed Gaudin model are not orthogonal for different $M$'s, with respect to the natural inner product considered.

The article is organized as follows. In section 2 we study the deformed Gaudin model based on the r-matrix \eqref{defJordan}, emphasizing the operators $B^{(k)}_{M}$ and their properties. Using 
the commutation relations of the creation operators $B_{M}$ with the  generating function of integrals of motion $t(\l)$, in section 3 the Bethe vectors of the system are given. It is also confirmed that the spectrum is the one of the invariant system. The duals of the creation operators are introduced in section 4 in order to obtain the expressions for the inner products and the norms of Bethe states. The results of the study are briefly summarized in the conclusions. The proofs of the lemmas are given in the appendix.

\section{Algebraic Bethe Ansatz for deformed Gaudin model}

The Gaudin model based on the $sl_{2}$-invariant r-matrix with an extra Jordanian term depending on the spectral parameters  
\begin{equation}
\label{r-matrix}
\begin{split}
r_{\xi}(\l,\mu) = \frac{c_{2}^{\ox}}{\l-\mu} + \xi p(\l,\mu)&= \frac{1}{\l-\mu}\bigg (h\ox h + 2(\xp\ox\xm+\xm\ox\xp)\bigg)\\
&+\xi \bigg (h\ox (\mu\xp) - (\l\xp)\ox h\bigg), \end{split}
\end{equation}
where $(h,X^{\pm})$ are the standard $sl_2$ generators
\begin{equation}
\label{sl2}
[h,X^{+}]=2 X^{+}, \quad  [X^{+},X^{-}]=h,  \quad [h,X^{-}]=-2 X^{-},
\end{equation}
is considered. The matrix form of $r_{\xi}(\l,\mu)$ in the fundamental representation of $sl_{2}$ is given explicitly by \cite{KMSS}
\begin{equation}
\label{r-m:mf}
r_{\xi}(\l,\mu)=
\begin{pmatrix}
 \displaystyle\frac{1}{\l -\mu } & \mu\xi  & -\l\xi  & 0 \\[0.25cm]
 0 & \displaystyle-\frac{1}{\l -\mu } & \displaystyle\frac{2}{\l -\mu } & \l\xi  \\[0.25cm]
 0 & \displaystyle\frac{2}{\l -\mu } &\displaystyle -\frac{1}{\l -\mu } & -\mu  \xi  \\[0.25cm]
 0 & 0 & 0 & \displaystyle\frac{1}{\l -\mu } 
\end{pmatrix},
\end{equation}
here $\l,\mu\in\bbC$ are the so-called spectral parameters and $\xi\in\bbC$ is a deformation parameter. The next step is to introduce the L-operator of the Gaudin model \cite{Sklyanin89}. In the present case the entries of the L-operator
\begin{equation}
\label{L-mat}
L(\l) = 
\begin{pmatrix}
h(\l) & 2X^{-}(\l) \\[0.1cm]
2X^{+}(\l) & - h(\l)
\end{pmatrix}
\end{equation}
are given by
$$h(\l) = \sum_{a=1}^{N} \left(\frac{h_{a}}{\l - z_a} + \xi z_{a}\xp_{a}\right),$$
\begin{equation}
\label{eq:09}
\xm(\l) = \sum_{a=1}^{N} \left(\frac{\xm_{a}}{\l - z_a} - \frac{\xi}{2}\l h_{a}\right), \
\xp(\l) = \sum_{a=1}^{N} \frac{\xp_{a}}{\l - z_a},
\end{equation}
where $h_{a} = \pi_{a}^{(\ell_{a})} (h) \in \operatorname{End}(V_{a}^{(\ell_{a})})$, $X^{\pm}_{a} = \pi_{a}^{(\ell_{a})} (X^{\pm}) \in \operatorname{End}(V_{a}^{(\ell_{a})})$ and $\pi_{a}^{(\ell_{a})}$ is an irreducible representation of $sl_{2}$ whose representation space is $V_{a}^{(\ell_{a})}$ corresponding to the highest weight $\ell_{a}$ and the highest weight vector $\o_{a}\in V_{a}^{(\ell_{a})}$, i.e. $\xp_{a}\o_{a}=0$ and $h_{a}\o_{a}=\ell_{a}\,\o_{a}$, at each site $a=1,\ldots,N$. Notice that $\ell_{a}$ is a nonnegative integer and the $(\ell_{a} + 1)$-dimensional representation space $V_{a}^{(\ell_{a})}$ has the natural Hermitian inner product such that (cf. \cite{Kosmann-Schwarzbach:2009fk}) 
$$(\xp_a)^{*}=\xm_a, \quad (\xm_a)^{*}=\xp_a \quad \text{and} \quad h_a^{*}=h_a.$$
The space of states of the system $\cH = V_{1}^{(\ell_{1})}\ox\cdots\ox V_{N}^{(\ell_{N})}$ is 
naturally equipped with the Hermitian inner product $\langle \cdot \lvert \cdot \rangle $ as a tensor product of the spaces $V_{a}^{(\ell_{a})}$ for $a=1,\ldots,N$.

The L-operator \eqref{L-mat} satisfies the so-called Sklyanin linear bracket \cite{Sklyanin89}
\begin{equation}
\label{S-l-bracket}
\left[\LI(\l),\LII(\mu)\right] = -\left[r_{\xi}(\l,\mu),\LI(\l) + \LII(\mu)\right].
\end{equation}
Both sides of this relation have the usual commutators of the $4 \times 4$ matrices  $\LI(\l) = L(\l) \otimes \mathbbm{1}$, $\LII(\l)= \mathbbm{1} \otimes L(\l)$ and $r_{\xi}(\l,\mu)$, where $\mathbbm{1}$ is the $2 \times 2$ identity matrix. The relation \eqref{S-l-bracket} is a compact matrix form of the following commutation relations 
\begin{align}
\left[h(\l),h(\mu)\right] &= 2\xi \left(\l\xp(\l) - \mu\xp(\mu)\right)\notag\\
\left[\xm(\l),\xm(\mu)\right] &= -\xi \left(\mu\xm(\l) - \l\xm(\mu) \right),\notag \\
\left[\xp(\l),\xm(\mu)\right] &= -\frac{h(\l) - h(\mu)}{\l - \mu} + \xi\mu\xp(\l),\notag\\ 
\left[\xp(\l),\xp(\mu)\right] &= 0, 
\label{loopalg}
\\
\left[h(\l),\xm(\mu)\right] &= 2\frac{\xm(\l) - \xm(\mu)}{\l - \mu} + \xi\mu h(\mu),\notag\\
\left[h(\l),\xp(\mu)\right] &= -2\frac{\xp(\l) - \xp(\mu)}{\l - \mu}.\notag
\end{align}
%

In order to define a dynamical system besides the algebra of observables a Hamiltonian should be specified. Due to the Sklyanin linear bracket \eqref{S-l-bracket} the generating function \cite{Sklyanin89}
\begin{equation}
\label{t(l)}
t(\l) = \frac{1}{2}\tr L^2(\l) = h^{2}(\l) -2 h^{\prime}(\l) + 2\left(2\xm(\l)+\xi\l\right)\xp(\l)
\end{equation}
satisfies
\begin{equation}
\label{eq:11}
t(\l)t(\mu) = t(\mu)t(\l).
\end{equation}
The pole expansion of the generating function $t(\l)$ is obtained by substituting \eqref{eq:09} into \eqref{t(l)}
\begin{equation}
\label{poleexp}
t(\l) = \sum_{a=1}^{N}\left(\frac{\ell_a(\ell_a +2)}{(\l - z_a)^{2}} + \frac{2H^{(a)}}{\l - z_a} \right) + 2\xi (1-h_{(gl)})\xp_{(gl)} + \xi^{2} \sum_{a,b=1}^{N}z_{a}z_{b}\xp_{a}\xp_{b}.
\end{equation}
The residues of the generating function $t(\l)$ at the points $\l=z_{a}$, $a=1,\ldots,N$ are the Gaudin Hamiltonians
\begin{equation}
\label{GHam}
H^{(a)} =  \sum_{b\neq a}^{N}\left(\frac{c_{2}(a,b)}{z_a - z_b} + \xi \left(z_{b}h_{a} X_{b}^{+} - z_{a}X_{a}^{+}h_{b} \right)\right),
\end{equation}
where $c_{2}(a,b) = h_{a} h_{b} + 2(\xp_{a}\xm_{b} + \xm_{a}\xp_{b})$. In the constant term of the expansion \eqref{poleexp} the notation
\begin{equation}
\label{globals}
Y_{(gl)}= \sum_{a=1}^{N}Y_{a},
\end{equation}
for $Y = (h,X^{\pm})$, was used to denote the generators of the so-called global $sl_2$ algebra. In the case when $\xi = 0$ the global $sl_2$ algebra is a symmetry of the system.

Moreover, from the formula \eqref{poleexp} it is straightforward to obtain 
\begin{equation}
\label{t-decomp}
t(\l) = t(\l)_{0} +2\xi\left(h(\l)_{0}\hat{X}^{+}_{(gl)} +\xp_{(gl)}- \l h_{(gl)}\xp(\l)\right) +\xi^{2} (\hat{X}^{+}_{(gl)})^{2},
\end{equation}
where $\hat{X}^{+}_{(gl)}=\sum_{a=1}^{N}z_{a}\xp_{a}$, $h(\l)_{0}=h(\l)|_{\xi=0}$ and $t(\l)_{0}=t(\l)|_{\xi=0}$ is the 
generating function of the integrals of motion in the $sl_{2}$-invariant case.

In order to obtain the Bethe states of the system it is necessary to fully implement the ABA. In the space of states $\cH$ 
the vector
\begin{equation}
\label{Omega}
\O_{+}=\o_{1}\ox\cdots\ox\o_{N}
\end{equation}
is such that $\langle \O_{+} \lvert \O_{+}\rangle =1$ and
\begin{equation}
\label{vacuum}
\xp(\l)\O_{+}=0, \quad h(\l)\O_{+}=\rho(\l)\O_{+},
\end{equation}
with
\begin{equation}
\label{Ro}
\rho(\l)=\sum_{a=1}^{N}\frac{\ell_{a}}{\l-z_{a}}.
\end{equation}

The commutators between the generating function \eqref{t(l)} and the entries of the L-operator \eqref{eq:09}
are
\begin{equation}
\label{eq:14}
\left[t(\l),\xp(\mu)\right] = 4\frac{h(\l)\xp(\mu) - h(\mu)\xp(\l)}{\l - \mu} - 4\,\xi\l\,\xp(\l)\xp(\mu),
\end{equation}

\begin{equation}
\label{eq:15}
\left[t(\l),h(\mu)\right] = 8\frac{\xm(\mu)\xp(\l) - \xm(\l)\xp(\mu)}{\l - \mu} - 4\,\xi(1+\mu\,h(\l))\xp(\mu),
\end{equation}

\begin{equation}
\label{eq:16}
\begin{split}
\left[t(\l),\xm(\mu)\right] &= 4\frac{\xm(\l)h(\mu) - \xm(\mu)h(\l)}{\l - \mu} + 2 \xi(1+\mu h(\l))h(\mu)\\
&\quad + 4\xi\l\xm(\mu)\xp(\l) + 2(\xi\mu)^{2}\xp(\mu).
\end{split}
\end{equation}

The next step in the Algebraic Bethe Ansatz is to define appropriate creation operators that yield the Bethe states. 
The creation operators used in the $sl_{2}$-invariant Gaudin model coincide with one of the L-matrix entry \cite{Gaudin2,Sklyanin89}. However, in the present case these operators are non-homogeneous polynomials of the operator $\xm(\l)$. It is convenient to define a more general set of operators in order to simplify the presentation.
\begin{definition}
\label{def:1}
Given integers $M$ and $k\geq 0$, let $\bmu=\left\{\mu_{1},\ldots,\mu_{M}\right\}$ be a set of complex scalars. 
Define the operators
\begin{align}
\label{BMk}
B^{(k)}_{M}(\bmu) &=\left(\xm(\mu_{1})+k\xi\mu_{1}\right)\left(\xm(\mu_{2})+(k+1)\xi\mu_{2}\right)\cdots\left(\xm(\mu_{M}) + (M+k-1)\xi\mu_{M}\right)\notag\\
&= \prod_{\substack{n=k\\ \rightarrow}}^{M+k-1} \left(\xm(\mu_{n-k+1}) + n\,\xi\mu_{n-k+1}\right),
\end{align}
with $B^{(k)}_{0}=1$ and $B^{(k)}_{M}=0$ for $M<0$. 
\end{definition}
The following lemma describes some properties of the $B^{(k)}_{M}(\bmu)$ operators which will be used later on.

\begin{lemma}
\label{lem:1}
Some useful properties of $B^{(k)}_{M}(\mu_{1},\ldots,\mu_{M})$ operators are:
\begin{equation}\begin{split}
\label{recrel}
\mathrm{i)}\quad B^{(k)}_{M}(\mu_{1},\ldots,\mu_{M}) &=B^{(k)}_{m}(\mu_{1},\ldots,\mu_{m})B^{(k+m)}_{M-m}(\mu_{m+1},\ldots,\mu_{M}),\ \ m=1,\ldots,M\\
& = B^{(k)}_{1}(\mu_{1})B^{(k+1)}_{M-1}(\mu_{2},\ldots,\mu_{M})\\
& =  B^{(k)}_{M-1}(\mu_{1},\ldots,\mu_{M-1})B_{1}^{(M+k-1)}(\mu_{M});
\end{split}\end{equation}

$\mathrm{ii)}$ The operators $B^{(k)}_{M}(\mu_{1},\ldots,\mu_{M})$ are symmetric functions of their arguments;\\[0.1cm]

$\mathrm{iii)}\quad B^{(k)}_{M}(\bmu)= B^{(k-1)}_{M}(\bmu) + \xi\sum_{i=1}^{M}\mu_{i}B^{(k)}_{M-1}(\bmu^{(i)})$.
\end{lemma}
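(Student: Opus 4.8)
The plan is to prove the three parts in order, using (i) to reduce both (ii) and (iii) to manipulations of the individual factors $\xm(\mu_j)+(k+j-1)\xi\mu_j$ that make up the ordered product \eqref{BMk}. Throughout, the decisive structural fact is that each shift $\xi\mu_j$ is a \emph{scalar}, so it commutes with everything, whereas the operators $\xm(\mu_j)$ obey the commutation relations \eqref{loopalg}. It is convenient to write $g_j=\xm(\mu_j)+(k+j-1)\xi\mu_j$ for the $j$-th factor of $B^{(k)}_M$ and $\tilde g_j=\xm(\mu_j)+(k+j-2)\xi\mu_j$ for the $j$-th factor of $B^{(k-1)}_M$, so that $g_j-\tilde g_j=\xi\mu_j$ is a scalar.

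For part (i) the argument is pure index bookkeeping. By \eqref{BMk}, $B^{(k)}_M$ is the ordered product $g_1 g_2\cdots g_M$. Splitting this product after the $m$-th factor, I observe that the $i$-th factor of $B^{(k+m)}_{M-m}(\mu_{m+1},\ldots,\mu_M)$ is $\xm(\mu_{m+i})+((k+m)+i-1)\xi\mu_{m+i}$, which is exactly $g_{m+i}$, the $(m+i)$-th factor of $B^{(k)}_M$. Hence the two sub-products concatenate to reproduce $B^{(k)}_M$; the two displayed special cases are just $m=1$ and $m=M-1$.

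For part (ii), since adjacent transpositions generate the symmetric group, it suffices to prove invariance under the swap $\mu_i\leftrightarrow\mu_{i+1}$. Applying (i) twice I isolate the two-factor block carrying $\mu_i,\mu_{i+1}$ and reduce the claim to the single identity $B^{(k')}_2(\mu,\nu)=B^{(k')}_2(\nu,\mu)$ for arbitrary $k'$. Expanding both products, the scalar term $k'(k'+1)\xi^2\mu\nu$ cancels and the remaining terms collapse to the requirement $[\xm(\mu),\xm(\nu)]=\xi(\mu\xm(\nu)-\nu\xm(\mu))$, which is precisely the second relation in \eqref{loopalg}.

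For part (iii), the key observation is that deleting $\mu_i$ from $B^{(k)}_{M-1}$ lowers the coefficient of every factor to the right of position $i$ by one, so that $B^{(k)}_{M-1}(\bmu^{(i)})=g_1\cdots g_{i-1}\,\tilde g_{i+1}\cdots\tilde g_M$: the factors before $\mu_i$ stay of $k$-type while those after become of $(k-1)$-type. Introducing the mixed products $P_i=g_1\cdots g_i\,\tilde g_{i+1}\cdots\tilde g_M$ for $i=0,\ldots,M$, one has $P_0=B^{(k-1)}_M$ and $P_M=B^{(k)}_M$, while pulling out the scalar $g_i-\tilde g_i=\xi\mu_i$ gives $P_i-P_{i-1}=\xi\mu_i\,B^{(k)}_{M-1}(\bmu^{(i)})$. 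Summing over $i$ telescopes to $P_M-P_0=B^{(k)}_M-B^{(k-1)}_M$, which is exactly the asserted identity. The main obstacle is this bookkeeping in (iii): one must correctly recognize the shifted coefficients in $B^{(k)}_{M-1}(\bmu^{(i)})$ so that the summands align as consecutive differences $P_i-P_{i-1}$; once that is seen, parts (i) and (ii) are routine given \eqref{loopalg} and the scalar nature of the shifts.
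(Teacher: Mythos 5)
Your proposal is correct. Parts (i) and (ii) follow the paper's own route exactly: (i) is read off from the ordered-product definition \eqref{BMk}, and (ii) is reduced via (i) to the single identity $B^{(k')}_{2}(\mu,\nu)=B^{(k')}_{2}(\nu,\mu)$, which upon expansion is equivalent to the relation $\left[\xm(\mu),\xm(\nu)\right]=-\xi\left(\nu\xm(\mu)-\mu\xm(\nu)\right)$ from \eqref{loopalg} --- precisely the paper's computation. For part (iii), however, you take a genuinely different and arguably cleaner path. The paper proves (iii) by induction on $M$: it verifies the case $M=2$ by hand and then, in the inductive step, peels off the last factor $B^{(M+k)}_{1}(\mu_{M+1})=B^{(M+k-1)}_{1}(\mu_{M+1})+\xi\mu_{M+1}$ and applies the hypothesis together with the recursion (i). Your telescoping argument with the interpolating products $P_{i}=g_{1}\cdots g_{i}\,\tilde g_{i+1}\cdots\tilde g_{M}$ replaces the induction by a one-line identity $P_{i}-P_{i-1}=\xi\mu_{i}\,B^{(k)}_{M-1}(\bmu^{(i)})$; the key bookkeeping step --- that deleting $\mu_{i}$ shifts every factor to its right from $k$-type to $(k-1)$-type, so that $B^{(k)}_{M-1}(\bmu^{(i)})=g_{1}\cdots g_{i-1}\tilde g_{i+1}\cdots\tilde g_{M}$ --- is verified correctly. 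What your approach buys is that it makes manifest \emph{why} the sum over deleted arguments appears (each summand is a consecutive difference), and it needs neither induction nor the symmetry statement (ii); what the paper's induction buys is that it reuses machinery (the recursion (i) and the $M=2$ expansion) already set up for the other parts. Both arguments rest on the same structural fact you emphasize, namely that the shifts $\xi\mu_{j}$ are scalars.
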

The notation used above is the following. Let $\bmu=\left\{\mu_{1},\ldots,\mu_{M}\right\}$ be a set of complex scalars, then
$$
\bmu^{(i_{1},\ldots,i_{k})}=\bmu\setminus\left\{\mu_{i_{1}},\ldots,\mu_{i_{k}}\right\}
$$
for any distinct $i_{1},\ldots,i_{k}\in\left\{1,\ldots,M\right\}$. 

The Algebraic Bethe Ansatz requires the commutation relations between the entries of the L-operator and the $B^{(k)}_{M}(\bmu)$ operators. To this end the notation for the Bethe operators are introduced
\begin{equation}
\label{eq:26}
\betheop_{M} (\l;\bmu) = h(\l) + \sum_{\mu\in\bmu}\frac{2}{\mu - \l},
\end{equation}
where $\bmu=\{\mu_{1},\ldots,\mu_{M-1}\}$ and $\l\in\bbC\setminus\bmu$. In the particular case $M=1$, $\betheop_{1} (\l;\emptyset) = h(\l)$ and will be denoted by $\betheop_{1} (\l)$. The required commutators are given in the following lemma.

\begin{lemma}
\label{lem:2}
The commutation relations between the operators $h(\l)$, $X^{\pm}(\l)$ and the $B^{(k)}_{M}(\mu_{1},\ldots,\mu_{M})$ operators are given by

\begin{align}
\label{eq:2.27}
h(\l)B^{(k)}_{M}(\bmu) &= B^{(k)}_{M}(\bmu) h(\l) + 2\sum_{i=1}^{M}\frac{B^{(k)}_{M}(\l\cup\bmu^{(i)})-B^{(k)}_{M}(\bmu)}{\l-\mu_{i}}\notag\\
&\quad + \xi \sum_{i=1}^{M}B^{(k+1)}_{M-1}(\bmu^{(i)})\bigg(\mu_{i}\betheop_{M}(\mu_{i};\bmu^{(i)})-2k\bigg);
\end{align}

\begin{align}
\label{eq:2.28}
\xp(\l)B^{(k)}_{M}(\bmu) &= B^{(k)}_{M}(\bmu) \xp(\l) - 2\sum_{\substack{i,j=1\\i<j}}^{M}\frac{B^{(k+1)}_{M-1}(\l\cup\bmu^{(i,j)})}{(\l-\mu_{i})(\l-\mu_{j})}\notag\\
&\quad - \sum_{i=1}^{M} B^{(k+1)}_{M-1}(\bmu^{(i)})\left( \frac{\betheop_{M}(\l;\bmu^{(i)})-\betheop_{M}(\mu_{i};\bmu^{(i)})}{\l-\mu_{i}} - \xi \mu_{i}\xp(\l)\right);
\end{align}
\begin{align}
\label{eq:2.29}
\xm(\l)B^{(k)}_{M}(\bmu) &= B^{(k)}_{M}(\bmu)\left(\xm(\l) + M\xi\l \right) -\xi \sum_{i=1}^{M} \mu_{i}B^{(k)}_{M}(\l\cup\bmu^{(i)})\notag\\
&=B^{(k)}_{M+1}(\l\cup\bmu) -\xi \sum_{i=1}^{M} \mu_{i}B^{(k)}_{M}(\l\cup\bmu^{(i)}).
\end{align}
\end{lemma}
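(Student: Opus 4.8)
The three relations are operator identities valid for every $M\geq 0$, so I would prove them by induction on $M$, in each step peeling off the leftmost factor of the product by means of the factorization $B^{(k)}_{M}(\bmu)=B^{(k)}_{1}(\mu_{1})\,B^{(k+1)}_{M-1}(\mu_{2},\ldots,\mu_{M})$ from part (i) of Lemma \ref{lem:1}, where $B^{(k)}_{1}(\mu_{1})=\xm(\mu_{1})+k\xi\mu_{1}$. For $M=0$ every right-hand side collapses to the bare operator (since $B^{(k)}_{0}=1$), and the cases $M=1$ reduce directly to the elementary brackets \eqref{loopalg}; these furnish the base of the induction. Since the bracket of $\xp(\l)$ with an $\xm$-factor produces $h$-operators, which then have to be transported through the remaining factors, I would establish the identities in the order \eqref{eq:2.29}, \eqref{eq:2.27}, \eqref{eq:2.28}, so that each relation is available as an input to the next.

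First I would treat \eqref{eq:2.29}, the simplest of the three, because $[\xm(\l),\xm(\mu)]=-\xi(\mu\xm(\l)-\l\xm(\mu))$ involves neither $h$ nor $\xp$. Commuting $\xm(\l)$ rightward through $B^{(k)}_{M}(\bmu)$ therefore only regenerates operators of the same $\xm$-type; collecting them yields the boundary term $B^{(k)}_{M}(\bmu)\bigl(\xm(\l)+M\xi\l\bigr)$ together with the sum $-\xi\sum_{i}\mu_{i}B^{(k)}_{M}(\l\cup\bmu^{(i)})$. The equivalent second line of \eqref{eq:2.29} is then obtained by recognising the boundary term as an $(M+1)$-argument product through the factorization identities of Lemma \ref{lem:1}.

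For \eqref{eq:2.27} the relevant bracket is $[h(\l),\xm(\mu)]=2\frac{\xm(\l)-\xm(\mu)}{\l-\mu}+\xi\mu\,h(\mu)$. When $h(\l)$ is pushed through the factors, the divided-difference term reassembles, via the recursion of Lemma \ref{lem:1}, into $2\sum_{i}\frac{B^{(k)}_{M}(\l\cup\bmu^{(i)})-B^{(k)}_{M}(\bmu)}{\l-\mu_{i}}$, while the $\xi\mu_{i}h(\mu_{i})$ terms, carried through the surviving factors by the inductive hypothesis, build up the Bethe operator $\betheop_{M}(\mu_{i};\bmu^{(i)})$ acting on $B^{(k+1)}_{M-1}(\bmu^{(i)})$. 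The constant $-2k$ in the last sum is exactly what compensates the residual $k$-dependent scalar produced when the additive shift $k\xi\l$ inside $B^{(k)}_{M}(\l\cup\bmu^{(i)})$ is passed through the divided difference; organising the $h$ and $\frac{2}{\mu-\l}$-type contributions into $\betheop_{M}$ is the bookkeeping that keeps this step tractable.

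The hardest identity, and the one I expect to be the main obstacle, is \eqref{eq:2.28}. Here $[\xp(\l),\xm(\mu)]=-\frac{h(\l)-h(\mu)}{\l-\mu}+\xi\mu\xp(\l)$ converts an $\xm$-factor into an $h$-operator, which must then be transported through the remaining factors using the already-proved relation \eqref{eq:2.27}; it is precisely this two-stage, nested commutation that generates the symmetric double sum $-2\sum_{i<j}\frac{B^{(k+1)}_{M-1}(\l\cup\bmu^{(i,j)})}{(\l-\mu_{i})(\l-\mu_{j})}$, in which two arguments are removed and a single $\l$ is inserted. The delicate part is to check that the first-order commutators assemble exactly the single sum carrying the divided difference $\frac{\betheop_{M}(\l;\bmu^{(i)})-\betheop_{M}(\mu_{i};\bmu^{(i)})}{\l-\mu_{i}}$ and the term $\xi\mu_{i}\xp(\l)$, that the second-order effects collapse precisely into the pair sum, and that all residual $k$-dependent and $\xi^{2}$-contributions cancel. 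Throughout, the symmetry (part (ii)) and recursion (part (i)) of Lemma \ref{lem:1} are used repeatedly to recognise and consolidate terms.
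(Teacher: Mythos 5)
Your plan is correct and matches the paper's own proof in all essentials: both proceed by induction on $M$, peeling off a single factor $B^{(k)}_{1}$ via the factorization in part (i) of Lemma \ref{lem:1} (the paper peels the rightmost factor where you peel the leftmost), reducing everything to the elementary brackets \eqref{loopalg} and the already-established cases of smaller $M$. The paper only writes out the induction for \eqref{eq:2.27} and declares \eqref{eq:2.28} and \eqref{eq:2.29} ``analogous,'' so your explicit ordering \eqref{eq:2.29}, \eqref{eq:2.27}, \eqref{eq:2.28} --- with \eqref{eq:2.27} feeding into \eqref{eq:2.28} because $[\xp(\l),\xm(\mu)]$ creates $h$-operators that must be transported onward --- correctly supplies the dependency the paper leaves implicit.
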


It is important to notice that the creation operators that yield the Bethe states of the system are the operators $B^{(0)}_{M}(\bmu)$, below denoted by $B_{M}(\bmu)$. These so-called B-operators are essentially based on those proposed by Kulish in \cite{Kulish2002} and then used in the implementation of the ABA for the Gaudin model based on the r-matrix \eqref{rJordan} 
\cite{CAM1}.  
A recursive relation defining the creation operators follows from i) of the lemma \ref{lem:1}
\begin{equation}
\label{eq:22}
B_{M}(\bmu) = B_{M-1}(\bmu^{(M)})\left(\xm(\mu_{M}) + (M-1)\xi\mu_{M} \right).
\end{equation}
%
As a particular case of lemma \ref{lem:2} the commutation relations between operators $h(\l)$, $X^{\pm}(\l)$ and the B-operators can be obtained by setting $k=0$ in \eqref{eq:2.27}-\eqref{eq:2.29}.

The main step in the implementation of the ABA is to calculate the commutation relations between the creation operators, in this case the B-operators, and the generating function of the integrals of motion $t(\l)$. The complete expression of the required commutator is given in the lemma below and is based on the results established previously in this section. This will be crucial for determining the spectrum of the system and thus constitutes one of the main results of the paper.
\begin{lemma}
\label{lem:4}
The commutation relations between the generating function of the integrals of motion $t(\l)$ and the $B$-operators are given by

\begin{align}
\label{eq:2.34}
t(\l)B_{M}(\bmu) &= B_{M}(\bmu)  \left( t(\l) - \sum_{i=1}^{M} \frac{4h(\l)}{\l - \mu_{i}} + \sum_{i<j}^{M} \frac{8}{\left(\l - \mu_{i}\right) \left(\l - \mu_{j}\right)}+4 M \xi \l\,\xp(\l)\right) \notag\\
& + 4\sum_{i=1}^{M} \frac{B_{M}(\l\cup\bmu^{(i)})}{\l - \mu_{i}}\betheop_{M}(\mu_{i};\bmu^{(i)}) \notag\\
&+2\xi\sum_{i=1}^{M}B^{(1)}_{M-1}(\bmu^{(i)})(\mu_{i}h(\l)+1) \betheop_{M}(\mu_{i};\bmu^{(i)})  \notag \\
%
& + 4\xi\sum_{\substack{i,j=1\\i\neq j}}^{M} \mu_{i}\frac{B^{(1)}_{M-1}(\l\cup\bmu^{(i,j)})-B^{(1)}_{M-1}(\bmu^{(i)})}{\l-\mu_{j}}\betheop_{M}(\mu_{i};\bmu^{(i)})  \notag\\
& + \xi^{2}\sum_{\substack{i,j=1\\i\neq j}}^{M} \mu_{i}B^{(2)}_{M-2}(\bmu^{(i,j)})\left(\mu_{j}\betheop_{M-1}(\mu_{j};\bmu^{(i,j)})-2\right)\betheop_{M}(\mu_{i};\bmu^{(i)})  \notag\\
& + 2\xi^{2}\sum_{i=1}^{M}\mu_{i}^{2} B^{(1)}_{M-1}(\bmu^{(i)}) \xp(\mu_{i}).
\end{align}
\end{lemma}

\begin{proof}
In the case when $M=1$ the commutator between the operators $t(\l)$ and $B_{1}(\mu) = \xm(\mu)$ has been calculated already, see \eqref{eq:16}. 

The next step is to outline the proof in the case when $M > 1$. The starting point in the calculation of the commutator $[t(\l),B_{M}(\bmu)]$ has to be the expression \eqref{t(l)}, so that 
\begin{equation}
\label{tBM-1st}
\begin{split}
\left[t(\l),B_{M}(\bmu)\right] &= \left[h(\l),\left[h(\l),B_{M}(\bmu)\right]\right] + 2\left[h(\l),B_{M}(\bmu)\right] h(\l) - 2 \frac{d}{d\l} \left[h(\l),B_{M}(\bmu)\right]\\
&+ 2\left(2\xm(\l)+\xi\l\right)\left[\xp(\l),B_{M}(\bmu)\right] +4\left[\xm(\l),B_{M}(\bmu)\right]\xp(\l).
\end{split}
\end{equation}
Notice that the terms on the right hand side of the equation above involve only the commutators between the entries of the L-operator and the B-operators. Each of them will be calculated separately, but previously an auxiliary result is established. 
Then the first term on the right hand side of the equation \eqref{tBM-1st} can be calculated using the expression \eqref{eq:2.27} 
\begin{align}
\label{eq:2.37}
&\left[h(\l),\left[h(\l),B_{M}(\bmu)\right]\right] =B_{M}(\bmu)\sum_{i<j}^{M}\frac{8}{(\l-\mu_{i})(\l-\mu_{j})}\notag\\
& - 4\sum_{i=1}^{M}\left(\frac{B_{M}(\l\cup\bmu^{(i)}) - B_{M}(\bmu)}{(\l-\mu_{i})^{2}} - \frac{\frac{d}{d\l}B_{M}(\l\cup\bmu^{(i)})}{\l-\mu_{i}}\right)\notag\\
& + 8\sum_{i < j}^{M}\frac{\xm(\l)B^{(1)}_{M-1}(\l\cup\bmu^{(i,j)})}{(\l-\mu_{i})(\l-\mu_{j})} - 4\sum_{i=1}^{M}\frac{B_{M}(\l\cup\bmu^{(i)})}{\l-\mu_{i}}\sum_{j\neq i}^{M}\frac{2}{\l-\mu_{j}}\notag\\
& + 4\xi\l\sum_{i<j}^{M}\frac{B^{(1)}_{M-1}(\l\cup\bmu^{(i,j)})}{(\l-\mu_{i})(\l-\mu_{j})} + 2\xi\sum_{i=1}^{M} B^{(1)}_{M-1}(\bmu^{(i)})\frac{\l\betheop_{M}(\l;\bmu^{(i)})-\mu_{i}\betheop_{M}(\mu_{i}\bmu^{(i)})}{\l-\mu_{i}}\notag\\
+ \xi\sum_{i\neq j}^{M} &\mu_{i}\left(4\frac{B^{(1)}_{M-1}(\l\cup\bmu^{(i,j)})-B^{(1)}_{M-1}(\bmu^{(i)})}{\l-\mu_{j}} + \xi B^{(2)}_{M-2}(\bmu^{(i,j)})\left(\mu_{j}\betheop_{M-1}(\mu_{j};\bmu^{(i,j)})-2\right)\right)\betheop_{M}(\mu_{i};\bmu^{(i)})\notag\\
& + 2\xi^{2}\sum_{i=1}^{M} \mu_{i}B^{(1)}_{M-1}(\bmu^{(i)})(\l\xp(\l)-\mu_{i}\xp(\mu_{i})).
\end{align}
The equation \eqref{eq:2.27}, with $k=0$, is also used to determine the next two terms on the right hand side of the equation \eqref{tBM-1st}
\begin{equation}
\label{eq:2.38}
\begin{split}
&2\left[h(\l),B_{M}(\bmu)\right] h(\l) = B_{M}(\bmu)\sum_{i=1}^{M}\frac{-4\,h(\l)}{\l-\mu_{i}} + 4\sum_{i=1}^{M}\frac{B_{M}(\{\l\}\cup\bmu^{(i)})}{\l-\mu_{i}}h(\l)\\
& + 2\xi\sum_{i=1}^{M}\mu_{i}B_{M-1}^{(1)}(\bmu^{(i)}) h(\l)\betheop_{M}(\mu_{i};\bmu^{(i)})-4\xi^{2}\sum_{i=1}^{M}\mu_{i}B_{M-1}^{(1)}(\bmu^{(i)})(\l\xp(\l)-\mu_{i}\xp(\mu_{i})),
\end{split}
\end{equation}
and after differentiating \eqref{eq:2.27}, with $k=0$,
\begin{equation}
\label{eq:2.39}
- 2 \frac{d}{d\l} \left[h(\l),B_{M}(\bmu)\right] = 4\sum_{i=1}^{M}\frac{B_{M}(\l\cup\bmu^{(i)}) - B_{M}(\bmu)}{(\l-\mu_{i})^{2}}-\frac{\frac{d}{d\l}B_{M}(\l\cup\bmu^{(i)})}{\l-\mu_{i}}.
\end{equation}
The forth term on the right hand side of the equation \eqref{tBM-1st} can be obtained directly using the expression for the commutator $[\xp(\l),B_{M}(\bmu)]$ given in \eqref{eq:2.28}, with $k=0$,
\begin{align}
& 2\left(2\xm(\l)+\xi\l\right)\left[\xp(\l),B_{M}(\bmu)\right] = -4\sum_{i=1}^{M} B_{M} (\l\cup\bmu^{(i)}) \frac{\betheop_{M}(\l;\bmu^{(i)})-\betheop_{M}(\mu_{i};\bmu^{(i)})}{\l-\mu_{i}}\notag \\
& -8\sum_{i<j}^{M}\frac{\xm(\l)B^{(1)}_{M-1}(\l\cup\bmu^{(i,j)})}{(\l-\mu_{i})(\l-\mu_{j})} - 2 \xi\l\sum_{i=1}^{M} B^{(1)}_{M-1}(\bmu^{(i)})\frac{\betheop_{M}(\l;\bmu^{(i)})-\betheop_{M}(\mu_{i};\bmu^{(i)})}{\l-\mu_{i}} \notag \\
& -4\xi\l\sum_{i<j}^{M}\frac{B^{(1)}_{M-1}(\l\cup\bmu^{(i,j)})}{(\l-\mu_{i})(\l-\mu_{j})} + 4\xi\,\sum_{i=1}^{M} \mu_{i}B_{M}(\l\cup\bmu^{(i)})\xp(\l)\notag \\
&+2\xi^{2}\l\,\sum_{i=1}^{M} \mu_{i}B^{(1)}_{M-1}(\bmu^{(i)})\xp(\l).
\end{align}
Finally, the last term on the right hand side of the equation \eqref{tBM-1st} is determined using the equation \eqref{eq:2.29}, with $k=0$,
\begin{equation}
\label{eq:2.41}
4\left[\xm(\l),B_{M}(\bmu)\right]\xp(\l) = 4\,M\,\xi\l\, B_{M}(\bmu)\xp(\l) - 4\,\xi\sum_{i=1}^{M}\mu_{i}B_{M}(\l\cup\bmu^{(i)})\xp(\l).
\end{equation}

The last step in the proof of the equation \eqref{eq:2.34} is a straightforward addition of all the necessary terms \eqref{eq:2.37}-\eqref{eq:2.41} calculated above.
\end{proof}

Computationally the equation \eqref{eq:2.34} is the crucial step of the ABA. In the next section this result will be used to determine the spectrum of the system. 

At the very end of this section a property of the B-operators relevant for the calculation of the inner products and the norms of the Bethe states is established. Namely, using the realization \eqref{eq:09}, the relation between the B-operators, given by the recurrent relation \eqref{eq:22}, and the corresponding ones for the $sl_2$-invariant Gaudin model, the case when $\xi = 0$, is obtained
\begin{equation}
\label{eq:54}
B_{M}(\bmu) = \sum_{k=0}^{M-1}\xi^{k} \sum_{j_{1<\cdots<j_{M-k}}}^{M} B_{M-k}(\mu_{j_{1}},\ldots,\mu_{j_{M-k}})_{0} \;\pop_{k}^{(M-k)}\left(\bmu^{(j_{1},\ldots,j_{M-k})}\right) + \xi^{M} \pop_{M}^{(0)}(\bmu)
\end{equation}
where 
$$
B_{M-k}(\mu_{j_{1}},\ldots,\mu_{j_{M-k}})_{0}=B_{M-k}(\mu_{j_{1}},\ldots,\mu_{j_{M-k}})|_{\xi=0}
$$
and the operators $\pop_{i}^{(j)}(\mu_{1},\ldots,\mu_{i})$ are given by
\begin{equation}
\label{phat}
\pop_{i}^{(j)}(\mu_{1},\ldots,\mu_{i})=\mu_{1}(-\frac{h_{gl}}{2}+j)\cdots\mu_{i}(-\frac{h_{gl}}{2}+i+j-1) , \quad \text{and} \quad \pop_{0}^{(i)}=1.
\end{equation}

%

In the next section it will be shown how the full implementation of the Algebraic Bethe Ansatz is based on the properties of the creation operators $B_{M}(\bmu)$ obtained here. 

%
%
\section{Spectrum and Bethe vectors of the model}

In this section the spectrum and the Bethe vectors of the Gaudin model based on the r-matrix \eqref{defJordan} 
will be obtained by the ABA. The first step of the ABA is the observation that the  the highest spin vector  $\O_{+}$ \eqref{Omega} is an eigenvector of the generating function of integrals of motion $t(\l)$, due to \eqref{t(l)}, \eqref{vacuum} and \eqref{Ro}. Then Bethe vectors $\Psi_{M}(\bmu)$ are given by the action of the $B$-operators on 
the highest spin vector $\Psi_{M}(\bmu)=B_{M}(\bmu)\O_{+}$. Finally the spectrum of the system is obtained as a consequence of the commutation relations between $t(\l)$ and $B_{M}(\bmu)$ lemma \ref{lem:4}, \eqref{eq:2.34}. The unwanted terms coming from the commutator are annihilated by the Bethe equations on the parameters $\bmu=\{\mu_{1},\ldots,\mu_{M}\}$ as well as the condition $\xp(\l)\O_{+}=0$. Thus the implementation of Algebraic Bethe Ansatz in this case can be resumed in the following theorem.
\begin{theorem}
\label{thm:1}
The highest spin vector $\O_{+}$ \eqref{Omega} is an eigenvector of the operator $t(\l)$ 
\begin{equation}
\label{Lambda0}
t(\l)\O_{+} = \L_{0}(\l) \O_{+}
\end{equation}
with the corresponding eigenvalue 
\begin{equation}
\label{eq:35}
\L_{0}(\l) = \rho^{2}(\l) - 2\rho^{\prime}(\l) 
= \sum_{a=1}^{N}\frac{2}{\l - z_{a}}\left(\sum_{b\neq a}^{N}\frac{\ell_{a}\ell_{b}}{z_{a} - z_{b}}\right)
+ \sum_{a=1}^{N}\frac{\ell_{a}(\ell_{a}+2)}{(\l - z_{a})^{2}} .
\end{equation}
Furthermore, the action of the $B$-operators on the highest spin vector $\O_{+}$ yields the 
Bethe vectors
\begin{equation}
\label{eq:36}
\Psi_{M}(\bmu)=B_{M}(\bmu)\O_{+},
\end{equation}
so that
\begin{equation}
\label{eq:37}
t(\l)\Psi_{M}(\bmu) = \L_{M}(\l;\bmu)\Psi_{M}(\bmu),
\end{equation}
with the eigenvalues
\begin{equation}
\label{GMspc}
\L_{M}(\l;\bmu) = \rho_{M}^{2}(\l;\bmu) - 2\frac{\partial \rho_{M}}{\partial\l}(\l;\bmu)\quad\text{and}\quad\rho_{M}(\l;\bmu)=\rho(\l)-\sum_{i=1}^{M}\frac{2}{\l-\mu_{i}},
\end{equation}
provided that the Bethe equations are imposed on the parameters $\bmu=\left\{\mu_{1},\ldots,\mu_{M}\right\}$
\begin{equation}
\label{eq:39}
\rho_{M}(\mu_{i};\bmu^{(i)}) = \sum_{a=1}^{N}\frac{\ell_{a}}{\mu_{i}-z_{a}} - \sum_{j\neq i}^{M}\frac{2}{\mu_{i}-\mu_{j}} = 0,\quad i=1,\dots,M.
\end{equation}
\end{theorem}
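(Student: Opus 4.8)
The plan is to deduce everything from the commutation relation of Lemma \ref{lem:4} together with the vacuum properties \eqref{vacuum}; the analytic content has already been absorbed into that lemma, so the theorem reduces to tracking which terms survive on $\O_{+}$ and matching scalar coefficients.

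First I would establish the eigenvalue on $\O_{+}$ by acting with $t(\l)$ in the form \eqref{t(l)} directly. Since $\xp(\l)\O_{+}=0$, the term $2(2\xm(\l)+\xi\l)\xp(\l)$ annihilates $\O_{+}$; since $h(\l)\O_{+}=\rho(\l)\O_{+}$ we get $h^{2}(\l)\O_{+}=\rho^{2}(\l)\O_{+}$, and differentiating $h(\l)\O_{+}=\rho(\l)\O_{+}$ in $\l$ gives $h'(\l)\O_{+}=\rho'(\l)\O_{+}$. Hence $t(\l)\O_{+}=(\rho^{2}(\l)-2\rho'(\l))\O_{+}$. The explicit form \eqref{eq:35} then follows by a partial-fraction computation using $\frac{1}{(\l-z_{a})(\l-z_{b})}=\frac{1}{z_{a}-z_{b}}\left(\frac{1}{\l-z_{a}}-\frac{1}{\l-z_{b}}\right)$: the off-diagonal part of $\rho^{2}$ antisymmetrizes into the simple-pole (residue) part, while the diagonal part of $\rho^{2}$ combines with $-2\rho'$ into $\sum_{a}\ell_{a}(\ell_{a}+2)/(\l-z_{a})^{2}$.

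Next I would apply the commutation relation \eqref{eq:2.34} to $\O_{+}$. On the first line, $t(\l)\O_{+}=\L_{0}(\l)\O_{+}$, $h(\l)\O_{+}=\rho(\l)\O_{+}$, and $\xp(\l)\O_{+}=0$ (which kills the $4M\xi\l\,\xp(\l)$ term), so the first line yields $B_{M}(\bmu)$ times the scalar $\L_{0}(\l)-\sum_{i}\frac{4\rho(\l)}{\l-\mu_{i}}+\sum_{i<j}\frac{8}{(\l-\mu_{i})(\l-\mu_{j})}$ applied to $\O_{+}$. The remaining five lines are the unwanted terms: lines two through five each carry a rightmost factor $\betheop_{M}(\mu_{i};\bmu^{(i)})$, and line six carries $\xp(\mu_{i})$, all of which act first on $\O_{+}$. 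The key observation is that, by \eqref{eq:26}, $\betheop_{M}(\mu_{i};\bmu^{(i)})=h(\mu_{i})+\sum_{j\neq i}\frac{2}{\mu_{j}-\mu_{i}}$, so $h(\mu_{i})\O_{+}=\rho(\mu_{i})\O_{+}$ gives $\betheop_{M}(\mu_{i};\bmu^{(i)})\O_{+}=\rho_{M}(\mu_{i};\bmu^{(i)})\O_{+}$ with $\rho_{M}$ exactly as in \eqref{eq:39}. Under the Bethe equations \eqref{eq:39} this vanishes, annihilating lines two through five, while $\xp(\mu_{i})\O_{+}=0$ kills line six. Thus only the first line survives.

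It remains to verify that the surviving scalar equals $\L_{M}(\l;\bmu)$ of \eqref{GMspc}. Expanding $\rho_{M}^{2}$ with $\rho_{M}=\rho-\sum_{i}\frac{2}{\l-\mu_{i}}$ produces $\rho^{2}$, the cross term $-\sum_{i}\frac{4\rho}{\l-\mu_{i}}$, and the square $\sum_{i,j}\frac{4}{(\l-\mu_{i})(\l-\mu_{j})}$, whose off-diagonal part is $\sum_{i<j}\frac{8}{(\l-\mu_{i})(\l-\mu_{j})}$ and whose diagonal part $\sum_{i}\frac{4}{(\l-\mu_{i})^{2}}$ is cancelled precisely by the corresponding contribution of $-2\partial_{\l}\rho_{M}$; the leftover $-2\rho'$ combines with $\rho^{2}$ into $\L_{0}(\l)$. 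This reproduces the first-line scalar, giving \eqref{eq:37}. I expect the only genuinely delicate point to be this last reassembly: recognizing that the $(\l-\mu_{i})^{-2}$ diagonal contributions cancel between $\rho_{M}^{2}$ and $-2\partial_{\l}\rho_{M}$, which is the structural reason the eigenvalue retains the same Riccati-type form $\rho_{M}^{2}-2\partial_{\l}\rho_{M}$ as the undeformed $sl_{2}$-invariant case, despite the deformation-dependent operator corrections in \eqref{eq:2.34}. Everything else is bookkeeping, since Lemma \ref{lem:4} already carries the computational weight.
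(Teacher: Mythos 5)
Your proposal is correct and follows essentially the same route as the paper: establish $t(\l)\O_{+}=\L_{0}(\l)\O_{+}$ from \eqref{t(l)} and \eqref{vacuum}, then apply Lemma \ref{lem:4} to $\O_{+}$, observe that every unwanted term carries a rightmost factor $\betheop_{M}(\mu_{i};\bmu^{(i)})$ acting as the scalar $\rho_{M}(\mu_{i};\bmu^{(i)})$ (hence vanishing under the Bethe equations) or a factor $\xp(\mu_{i})$ annihilating $\O_{+}$, and reassemble the surviving scalar into $\rho_{M}^{2}-2\partial_{\l}\rho_{M}$. Your final bookkeeping of the cancellation between the diagonal $(\l-\mu_{i})^{-2}$ terms of $\rho_{M}^{2}$ and $-2\partial_{\l}\rho_{M}$ is carried out in slightly more detail than in the paper, but the argument is the same.
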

\begin{proof}
The first step in the proof of the Theorem \ref{thm:1} is to show that the highest spin vector  $\O_{+}$ \eqref{Omega} is an eigenvector of the generating function of integrals of motion $t(\l)$. Using \eqref{t(l)}, \eqref{vacuum} and \eqref{t(l)} one gets
\begin{equation}
\label{}
    t(\l)\O_{+} = \left( h^{2}(\l) -2 h^{\prime}(\l) + 2\left(2\xm(\l)+\xi\l\right)\xp(\l) \right) \O_{+} =\L_{0}(\l)\O_{+}\notag
\end{equation}
with the eigenvalue $\L_{0}(\l) =(\rho^{2}(\l) -2 \rho^{\prime}(\l))$ and the function $\rho(\l)$ is given in the equation \eqref{Ro}. 
The next step is an important observation that the action of the operator $t(\l)$ on the Bethe vectors $\Psi_{M}(\bmu)$ \eqref{eq:36} is given by
\begin{equation*}
\label{ }
t(\l)\Psi_{M}(\bmu) = t(\l) B_{M}(\bmu)\O_{+} = \L_{0}(\l)\Psi_{M}(\bmu)+\left[t(\l),B_{M}(\bmu)\right]\O_{+}. 
\end{equation*}
From the equation above it is evident that the main step in the calculation of the action of the generating function of the integrals of motion on the Bethe vectors is computing the commutator between the operator $t(\l)$ and the creation operators $B_{M}(\bmu)$. From the Lemma \ref{lem:4} follows
\begin{align}
\label{tcomPsi}
\!\!\!t(\l)\Psi_{M}(\bmu)\! &= \left( \L_{0}(\l) - \sum_{i=1}^{M} \frac{4\rho(\l)}{\l - \mu_{i}} + \sum_{i<j}^{M} \frac{8}{(\l - \mu_{i})(\l - \mu_{j})}\right)\Psi_{M}(\bmu) \notag\\
&+ 4\sum_{i=1}^{M} \frac{\Psi_{M}(\l\cup\bmu^{(i)})}{\l - \mu_{i}}\rho_{M}(\mu_{i};\bmu^{(i)}) + 2\xi\sum_{i=1}^{M}
\rho_{M}(\mu_{i};\bmu^{(i)}) \times \notag\\
\times &\!\left( (1+\mu_{i}\rho(\l)) \Psi^{(1)}_{M-1}(\bmu^{(i)}) + 2\mu_{i} \sum_{i\neq j}^{M} \frac{\Psi^{(1)}_{M-1}(\l\cup\bmu^{(i,j)})-\Psi^{(1)}_{M-1}(\bmu^{(i)})}{\l-\mu_{j}}\right)\notag\\ 
& + \xi^{2} \sum_{i=1}^{M}\mu_{i}\left(\sum_{j\neq i}^{M} \left(\mu_{j}\rho_{M-1}(\mu_{j};\bmu^{(i,j)})-2\right) \Psi^{(2)}_{M-2}(\bmu^{(i,j)})\right)\rho_{M}(\mu_{i};\bmu^{(i)}).
\end{align}
The unwanted terms in \eqref{tcomPsi} are all proportional to the function $\rho_{M}(\mu_{i};\bmu^{(i)})$ and thus vanish
when the Bethe equations $\rho_{M}(\mu_{i};\bmu^{(i)}) = 0$, \eqref{eq:39} are imposed on the parameters $\bmu=\{\mu_{1},\ldots,\mu_{M}\}$.
Then the equation \eqref{tcomPsi}  has the required form \eqref{eq:37} 
with the eigenvalues given by formula \eqref{GMspc}.
\end{proof}

Although the deformed Gaudin model does not admit the global $sl_{2}$ symmetry, it is of interest to notice that, in the cases when $M$ could be equal to $1 + \textonehalf \sum_{a=1}^N \ell_{a}$, the on-shell Bethe vectors \eqref{eq:36} are highest weight vectors for the action of the global $sl_{2}$ algebra on $\cH$ i.e., 
$$
\xp_{(gl)} \Psi_{M}(\bmu) = 0 \ \text{and} \ h_{(gl)} \Psi_{M}(\bmu) = (-2 M ) \Psi_{M}(\bmu), \ \text{when} \ \, M =1 + \frac{1}{2} \sum_{a=1}^N \ell_{a}
$$
and the Bethe equations \eqref{eq:39} are imposed on the parameters $\bmu=\{\mu_{1},\ldots,\mu_{M}\}$. 
In particular, the formulae \eqref{eq:09} yield
\begin{equation}
\label{globalgen}
\xp_{(gl)} = \lim _{\l \to \infty} \left(\l \xp(\l) \right) \ \text{and} \ \, h_{(gl)} = \left(-\frac{2}{\xi} \right) \lim _{\l \to \infty} \left( \frac{\xm(\l)}{\l}\right) .
\end{equation}
The statement above follows from the formulae \eqref{globalgen}, the lemma \ref{lem:2}, the equations \eqref{eq:2.28} and \eqref{eq:2.29}, and the lemma \ref{lem:1}, the equation \eqref{recrel}.

From the commutator
\begin{align}
\left[t(\l),\xm_{(gl)}\right] &= -4\xi \left(\xm(\l) \hat{X}^{+}_{(gl)}+\l\xm_{(gl)}\xp(\l)\right) + 2\xi h(\l) \left(\hat{h}_{(gl)} - \l h_{(gl)}\right)\notag\\ 
& \quad+ 2\xi h_{(gl)} + \xi^{2}\left(h_{(gl)}+\hat{h}_{(gl)}\right)\hat{X}^{+}_{(gl)} + 2\xi^{2}\hat{X}^{+}_{(gl)},
\end{align}
it follows that,  even if the dimension of the invariant subspace associated with each Bethe vector is the same as for the invariant model, the vectors $(\xm_{(gl)})^{n}\Psi_{M}(\bmu)$ are not eigenvectors of $t(\l)$ and, in general, they even do not generate the corresponding invariant subspace.


To obtain the spectrum of the Gaudin Hamiltonians $H^{(a)}$ \eqref{GHam} one uses the pole expansion \eqref{poleexp} of the generating function $t(\l)$ and the previous theorem.
\begin{corollary}
\label{cor:2}
The Bethe vectors $\Psi_{M}(\bmu)$ \eqref{eq:36} are eigenvectors of the Gaudin Hamiltonians \eqref{GHam}
\begin{equation}
\label{GHsSpectra}
H^{(a)}\Psi_{M}(\bmu) = E_{M}^{(a)}\Psi_{M}(\bmu),
\end{equation}
with the corresponding eigenvalues
\begin{equation}
\label{GMVp}
E_{M}^{(a)}=\sum_{b\neq a}^{N}\frac{\ell_{a}\ell_{b}}{z_{a}-z_{b}}-\sum_{i=1}^{M}\frac{2\ell_{a}}{z_{a}-\mu_{i}}.
\end{equation}
\end{corollary}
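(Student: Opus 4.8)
The plan is to obtain the Gaudin Hamiltonians $H^{(a)}$ as residues of the generating function $t(\l)$ and then to read off their eigenvalues directly from the residues of the scalar function $\L_M(\l;\bmu)$ supplied by Theorem \ref{thm:1}. The first step is to invoke that theorem: once the Bethe equations \eqref{eq:39} are imposed on $\bmu$, the relation $t(\l)\Psi_M(\bmu) = \L_M(\l;\bmu)\Psi_M(\bmu)$ holds for all $\l$, with $\L_M$ given by \eqref{GMspc}. Since this is an identity of meromorphic functions of $\l$ acting on the single fixed vector $\Psi_M(\bmu)$, I may take residues of both sides at any point.

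Next I would use the pole expansion \eqref{poleexp}. The crucial remark is that the two deformation terms $2\xi(1-h_{(gl)})\xp_{(gl)}$ and $\xi^{2}\sum_{a,b=1}^{N}z_{a}z_{b}\xp_{a}\xp_{b}$ are independent of $\l$, hence regular at every $\l=z_a$, while the double-pole piece $\ell_a(\ell_a+2)/(\l-z_a)^2$ carries no residue. Therefore $\res_{\l=z_a} t(\l) = 2H^{(a)}$, exactly as in the undeformed model. Taking $\res_{\l=z_a}$ of the eigenvalue equation then gives $2H^{(a)}\Psi_M(\bmu) = \big(\res_{\l=z_a}\L_M(\l;\bmu)\big)\Psi_M(\bmu)$, so that $\Psi_M(\bmu)$ is an eigenvector of $H^{(a)}$ with $E_M^{(a)} = \tfrac12\,\res_{\l=z_a}\L_M(\l;\bmu)$.

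It remains to compute this scalar residue from \eqref{GMspc}. Decomposing $\rho_M(\l;\bmu) = \ell_a/(\l-z_a) + g_a(\l)$, where $g_a$ is holomorphic near $z_a$ with $g_a(z_a)=\sum_{b\neq a}^{N}\ell_b/(z_a-z_b) - \sum_{i=1}^{M}2/(z_a-\mu_i)$, the derivative term $-2\,\partial_{\l}\rho_M$ has only a double pole at $z_a$ and contributes nothing to the residue, whereas $\rho_M^{2}$ contributes $2\ell_a\,g_a(z_a)$. Hence $E_M^{(a)}=\ell_a\,g_a(z_a)$, which is precisely \eqref{GMVp}.

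The computation is entirely elementary, so there is no serious obstacle; the only point that genuinely requires attention---and the one that makes the statement noteworthy---is the verification that the $\xi$-dependent pieces of \eqref{poleexp} are regular at each $z_a$. This is exactly what guarantees that, although the deformed Hamiltonians \eqref{GHam} themselves contain the Jordanian terms $\xi(z_{b}h_{a}\xp_{b}-z_{a}\xp_{a}h_{b})$, their spectrum coincides with that of the $sl_2$-invariant Gaudin model.
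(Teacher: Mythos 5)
Your proposal is correct and follows exactly the route the paper indicates (the paper gives no detailed proof, only the remark that the corollary follows from the pole expansion \eqref{poleexp} and Theorem \ref{thm:1}): you take residues at $\l=z_a$ of the eigenvalue identity, note that the $\xi$-dependent constant terms and the double pole contribute nothing, and compute $\tfrac12\res_{\l=z_a}\L_M(\l;\bmu)=\ell_a g_a(z_a)$, which reproduces \eqref{GMVp}. Your residue computation checks out, so nothing further is needed.
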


Below it is assumed that  the local parameters $z_{a}$ are real numbers, \textit{i.e.} $z_{a}=\bar{z}_{a}$ for all $a=1,\ldots, N$ and the Hilbert structure in the representation space $\cH$ is as in section 2. As expected, the spectra of the Gaudin Hamiltonians $H^{(a)}$ are real, although these operators are not Hermitian, for all $a=1,\ldots, N$,
\begin{equation} 
H^{(a)}\neq (H^{(a)})^{\ast} = \sum_{b\neq a}^{N}\left(\frac{c_2^{\ox}(a,b)}{z_a-z_b} + \xi ( z_b h_a \xm_b - z_a h_b \xm_a)\right).
\end{equation}
Analogously, the generating functions of the integrals of motion $t(\l)$ is also not Hermitian
\begin{equation}
\label{dualt(l)}
t(\l) \neq (t(\l))^{\ast} =  t(\overline{\l})_{0} + 2\xi\left( \hat{X}^{-}_{(gl)} h(\overline{\l})_{0} +\xm_{(gl)} - \overline{\l}\xm(\overline{\l}) h_{(gl)} - \frac{\xi \overline{\l} ^2}{2} h_{(gl)} ^2\right) + \xi ^2 (\hat{X}^{-}_{(gl)})^{2},
\end{equation}
where $\hat{X}^{-}_{(gl)}=\sum_{a=1}^{N}z_{a}\xm_{a}$. 

The problem of correlation functions of spin chains have been extensively studied \cite{KBI} as well as for the $sl_{2}$-invariant Gaudin model \cite{Sklyanin99}.
In the case of the inner product of the off-shell Bethe vectors, for the system under study, a similar approach to that in \cite{CAM1} is developed below.

In order to establish a relation between the inner products of the Bethe vectors \eqref{eq:36} and the corresponding ones in the $sl_2$-invariant case  it is of interest to consider the property \eqref{eq:54} of the B-operators, established at the end of the section 2, in its dual form
\begin{equation}
\label{dualBMs}
B^{*}_{M}(\bl) = \sum_{k=0}^{M-1}\bar{\xi}^{\,k} \sum_{i_{1<\cdots<i_{M-k}}}^{M} \pop_{k}^{(M-k)}\left(\bl^{(i_{1},\ldots,i_{M-k})}\right)\; B^{*}_{M-k}(\l_{i_{1}},\ldots,\l_{i_{M-k}})_{0} + \bar{\xi}^{M}\pop_{M}^{(0)}(\bl)
\end{equation}
where $B^{*}_{M}(\bl) = \left( B_{M}(\overline{\bl}) \right)^{*}$ and
\begin{equation}
B^{*}_{N}(\l_{1},\ldots,\l_{N})_{0}=B^{*}_{N}(\l_{1},\ldots,\l_{N})|_{\xi=0}=\xp(\l_{1})\xp(\l_{2})\cdots\xp(\l_{N}).
\end{equation}

Notice that the operators \eqref{dualBMs} do not annihilate the highest spin vector $\O_{+}$ \eqref{Omega}
\begin{equation}
\label{eq:b01}
B^{*}_{M}(\l_{1},\ldots,\l_{M})\O_{+}=\bar{\xi}^{M}\pop_{M}^{(0)}(\bl)\O_{+} = \bar{\xi}^{\,M} p_{M}(\l_{1},\ldots,\l_{M})\O_{+},
\end{equation}
here $p_{M}(\l_{1},\ldots,\l_{M})=\l_{1}\cdots\l_{M}\, p_{M}$ with
\begin{equation}
\label{eq:4.11}
p_{M}=\left(-\frac{\rho_{(gl)}}{2}\right)\left(-\frac{\rho_{(gl)}}{2}+1\right)\cdots\left(-\frac{\rho_{(gl)}}{2}+M-1\right)\quad \text{and}\quad \rho_{(gl)}=\sum_{a=1}^{N}\ell_{a}.
\end{equation}

The following lemma yields an explicit relation between the inner products of the Bethe vectors, when the Bethe equations are not imposed on their parameters ("off-shell Bethe states"), of the deformed Gaudin model and the corresponding ones in the $sl_{2}$-invariant case. 
\begin{lemma}
\label{lem:6}
Let $M_{1},M_{2}$ be two natural numbers and $\bl=\{\l_{1},\ldots,\l_{M_{1}}\}$, $\bmu=\{\mu_{1},\ldots,\mu_{M_{2}}\}$ be sets of complex scalars.
Then the inner product between the off-shell Bethe states \eqref{eq:36} is given by
\begin{equation}
\label{InnerPr}
\left\langle\Psi_{M_{1}}(\bl)\lvert\Psi_{M_{2}}(\bmu)\right\rangle =  \bar{\xi}^{M_{1}}\xi^{M_{2}}\,p_{M_{1}}(\overline{\bl}) p_{M_{2}}(\bmu) + \sum_{k_{1}=0}^{M_{1}-1}\sum_{k_{2}=0}^{M_{2}-1}\bar{\xi}^{\,k_{1}}\xi^{\,k_{2}}  \big\langle \phi_{M_{1},\,k_{1}}(\bl)\lvert\phi_{M_{2},\,k_{2}}(\bmu)\big\rangle,
\end{equation}
where
\begin{equation}
\label{innerP1}
\phi_{M,\,k}(\bl)=\sum_{1\leqslant i_{1}<\cdots<i_{M-k}}^{M} p^{(M-k)}_{k}\left(\bl^{(i_{1},\ldots,i_{M-k})}\right)\Psi_{M-k}\left(\l_{i_{1}},\ldots,\l_{i_{M-k}}\right)_{0},
\end{equation}
with $\Psi_{M}(\cdot)_{0}=\Psi_{M}(\cdot)|_{\xi=0}$ and
\begin{equation}
\begin{split}
\label{phatOm}
\pop_{i}^{(j)}(\l_{1},\ldots,\l_{i}) \O_{+} &= p_{i}^{(j)}(\l_{1},\ldots,\l_{i}) \O_{+} \\
&= \left( \l_{1}(-\frac{\rho_{gl}}{2}+j)\cdots\l_{i}(-\frac{\rho_{gl}}{2}+i+j-1) \right) \O_{+}.
\end{split}
\end{equation}
\end{lemma}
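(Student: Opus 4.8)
The plan is to show that both off-shell Bethe states are, up to scalar factors built out of $h_{(gl)}$, linear combinations of the undeformed ($\xi=0$) Bethe states plus a single purely vacuum term, and then to form the inner product and discard the cross terms by a weight argument. Concretely, I would expand the ket $\lvert\Psi_{M_2}(\bmu)\rangle$ using \eqref{eq:54}, expand the bra $\langle\Psi_{M_1}(\bl)\lvert$ using the dual decomposition \eqref{dualBMs}, and multiply the two expansions together, invoking $\Psi_M(\bmu)=B_M(\bmu)\O_+$ from \eqref{eq:36} throughout.

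For the ket, apply \eqref{eq:54} to $B_{M_2}(\bmu)\O_+$. In every summand the factor $\pop^{(M_2-k_2)}_{k_2}$ sits to the right, so it acts directly on $\O_+$; by \eqref{phatOm} (and by \eqref{eq:b01} for the top term $k_2=M_2$) it is replaced by the scalar $p^{(M_2-k_2)}_{k_2}$, while $B_{M_2-k_2}(\cdots)_0\O_+=\Psi_{M_2-k_2}(\cdots)_0$ is the corresponding undeformed Bethe vector. Summing over the ordered index sets reproduces precisely the combination \eqref{innerP1}, giving $\Psi_{M_2}(\bmu)=\sum_{k_2=0}^{M_2-1}\xi^{k_2}\phi_{M_2,k_2}(\bmu)+\xi^{M_2}p_{M_2}(\bmu)\O_+$.

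For the bra, write $\langle\Psi_{M_1}(\bl)\lvert=\langle\O_+\lvert B^*_{M_1}(\overline{\bl})$, using $(B_{M_1}(\bl))^*=B^*_{M_1}(\overline{\bl})$, and apply \eqref{dualBMs} with argument $\overline{\bl}$. Now the $\pop$-factors sit on the left, adjacent to $\langle\O_+\lvert$; since $h_a^*=h_a$ makes $h_{(gl)}$ Hermitian and $\O_+$ is its eigenvector with the real eigenvalue $\rho_{(gl)}=\sum_a\ell_a$, each $\pop^{(M_1-k_1)}_{k_1}(\overline{\bl}^{(\cdots)})$ again collapses to a scalar, and $\langle\O_+\lvert B^*_{M_1-k_1}(\overline{\l}_{i_1},\ldots)_0=\langle\Psi_{M_1-k_1}(\l_{i_1},\ldots)_0\lvert$. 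This yields $\langle\Psi_{M_1}(\bl)\lvert=\sum_{k_1=0}^{M_1-1}\bar\xi^{k_1}\langle\phi_{M_1,k_1}(\bl)\lvert+\bar\xi^{M_1}p_{M_1}(\overline{\bl})\langle\O_+\lvert$; the reality of the $p$-coefficients (they are polynomials in $\rho_{(gl)}\in\bbR$) guarantees that conjugation merely sends $\l_i\mapsto\overline{\l}_i$, i.e. $\overline{p(\bl)}=p(\overline{\bl})$.

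Multiplying the two expansions, the double sum over $k_1,k_2\le M_i-1$ produces the terms $\langle\phi_{M_1,k_1}(\bl)\lvert\phi_{M_2,k_2}(\bmu)\rangle$, and the product of the two top terms gives $\bar\xi^{M_1}\xi^{M_2}p_{M_1}(\overline{\bl})p_{M_2}(\bmu)$, exactly the two pieces of \eqref{InnerPr}. The remaining cross terms carry factors $\langle\phi_{M_1,k_1}(\bl)\lvert\O_+\rangle$ or $\langle\O_+\lvert\phi_{M_2,k_2}(\bmu)\rangle$ with $k_i\le M_i-1$; since each $\phi$ is a combination of undeformed vectors $\Psi_{M-k}(\cdots)_0$ with $M-k\ge1$, which are $h_{(gl)}$-eigenvectors of weight $\rho_{(gl)}-2(M-k)\ne\rho_{(gl)}$, orthogonality of eigenvectors of the Hermitian operator $h_{(gl)}$ forces them to vanish, and \eqref{InnerPr} follows. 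I expect the only genuinely delicate point to be the complex-conjugation bookkeeping: verifying $(B_M(\bl))^*=B^*_M(\overline{\bl})$ (which at $\xi=0$ rests on $(\xm(\l))^*=\xp(\l)$ for real $z_a$ together with $[\xp(\l),\xp(\mu)]=0$) and $\overline{p(\bl)}=p(\overline{\bl})$; once these are in place, the argument is a direct substitution closed off by the weight-orthogonality of the lowered vectors against the vacuum.
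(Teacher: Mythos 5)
Your proof is correct and follows essentially the same route as the paper: both expand the ket via \eqref{eq:54} and the bra via its dual \eqref{dualBMs}, reduce the $\pop$-factors to scalars on $\O_{+}$ using \eqref{phatOm}, and identify the surviving terms as the $p_{M_{1}}(\overline{\bl})p_{M_{2}}(\bmu)$ piece plus the $\langle\phi\lvert\phi\rangle$ double sum. The only cosmetic difference is that you discard the cross terms by $h_{(gl)}$-weight orthogonality against $\O_{+}$, whereas the paper handles them at the operator level by rearranging $B^{*}_{M_{1}}(\overline{\bl})B_{M_{2}}(\bmu)$ in \eqref{DualBxB} and invoking \eqref{eq:b01}; the two devices are equivalent.
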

\begin{proof}
Let $\Psi_{M_{1}}(\bl)$ and $\Psi_{M_{2}}(\bmu)$ be two off-shell Bethe states. Then,
\begin{equation}
\label{innerP2}
\langle\Psi_{M_{1}}(\bl)\lvert\Psi_{M_{2}}(\bmu)\rangle = \langle B_{M_{1}}(\bl)\O_{+}\,\lvert\, B_{M_{2}}(\bmu)\O_{+}\rangle= \langle \O_{+}\,\lvert\, B^{*}_{M_{1}}(\overline{\bl})B_{M_{2}}(\bmu)\O_{+}\rangle .
\end{equation}
From the eqns. \eqref{eq:54} and \eqref{dualBMs} it follows
\begin{equation}\begin{split}
\label{DualBxB}
B^{*}_{M_{1}}(\overline{\bl})B_{M_{2}}(\bmu)&= \bar{\xi}^{M_{1}}\, \pop_{M_{1}}^{(0)}(\overline{\bl}) B_{M_{2}}(\bmu) + \xi^{M_{2}}\, B^{*}_{M_{1}}(\overline{\bl})\pop_{M_{2}}^{(0)}(\bmu)
- \bar{\xi}^{M_{1}}\xi^{M_{2}}\, \pop_{M_{1}}^{(0)}(\overline{\bl}) \pop_{M_{2}}^{(0)}(\bmu)\\[0.3cm]
& \!\!\!\!\!\!\!\!\!\!\!\!\!\!\!\!\!\!\!\!\!\!\!\!\!\!\!\!\!\! + \sum_{k_{1}=0}^{M_{1}-1}\sum_{k_{2}=0}^{M_{2}-1}\bar{\xi}^{\,k_{1}}\xi^{\,k_{2}} \; \sum_{i_{1}<\cdots<i_{M_{1}-k_{1}}}^{M_{1}} \sum_{j_{1}<\cdots<j_{M_{2}-k_{2}}}^{M_{2}} \pop^{(M_{1}-k_{1})}_{k_{1}}\left(\overline{\bl}^{(i_{1},\ldots,i_{M_{1}-k_{1})})}\right) \times\\[0.3cm]
&\!\!\!\!\!\!\!\!\!\!\!\!\!\!\!\!\!\!\!\!\!\!\!\!\!\!\!\!\!\!\!\!\!\!\!\!\!\!\!\!\!\!\!\!\times B^{*}_{M_{1}-k_{1}}(\overline{\l}_{i_{1}},\ldots,\overline{\l}_{i_{M_{1}-k_{1}}})_{0}\,
B_{M_{2}-k_{2}}(\mu_{j_{1}},\ldots,\mu_{j_{M_{2}-k_{2}}})_{0}\,\pop^{(M_{2}-k_{2})}_{k_{2}}\left(\bmu^{(j_{1},\ldots,j_{M_{2}-k_{2}})}\right).
\end{split}\end{equation}
Substituing \eqref{DualBxB} into \eqref{innerP2}, and taking into account \eqref{eq:b01}

\begin{equation}\begin{split}
\label{innerP3}
\left\langle\Psi_{M_{1}}(\bl)\lvert\Psi_{M_{2}}(\bmu)\right\rangle &=  \bar{\xi}^{M_{1}}\xi^{M_{2}}\,p_{M_{1}}(\overline{\bl}) p_{M_{2}}(\bmu) + \sum_{k_{1}=0}^{M_{1}-1}\sum_{k_{2}=0}^{M_{2}-1}\bar{\xi}^{\,k_{1}}\xi^{\,k_{2}} \; \times \\
 &\!\!\!\!\!\!\!\!\!\!\!\!\!\!\!\!\!\!\!\!\!\!\!\!\!\!\!\!\!\!\!\!\!\!\!\!\!\!\!\!\!\!\!\!\times \sum_{i_{1}<\cdots<i_{M_{1}-k_{1}}}^{M_{1}} \sum_{j_{1}<\cdots<j_{M_{2}-k_{2}}}^{M_{2}} 
 p^{(M_{1}-k_{1})}_{k_{1}}\left(\overline{\bl}^{(i_{1},\ldots,i_{M_{1}-k_{1})})}\right)\,p^{(M_{2}-k_{2})}_{k_{2}}\left(\bmu^{(j_{1},\ldots,j_{M_{2}-k_{2}})}\right)\; \times\\[0.3cm]
 &\times  \bigg\langle \Psi_{M_{1}-k_{1}}(\l_{i_{1}},\ldots,\l_{i_{M_{1}-k_{1}}})_{0}\bigg\lvert\Psi_{M_{2}-k_{2}}(\mu_{j_{1}},\ldots,\mu_{j_{M_{2}-k_{2}}})_{0}\bigg\rangle.
\end{split}\end{equation}
Given that $\langle\cdot\lvert\cdot \rangle$ is the  Hermitian inner product in the representation space $\cH$, as considered in section 2,
the right hand side of \eqref{innerP3} can be expressed as in \eqref{InnerPr}.
\end{proof}

The main result of this section is the equation \eqref{InnerPr} of the lemma \ref{lem:6} where the inner products
of the off-shell Bethe states of the deformed Gaudin model are expressed in terms of the $sl_{2}$-invariant inner products calculated by Sklyanin \cite{Sklyanin99}. A substantial advantage of this approach is that the result was obtained without doing the cumbersome calculation involving the commutators \eqref{loopalg} between the entries of the L-operator. 

Given that $\langle \phi_{M_{1},\,k_{1}}(\bl)\lvert\phi_{M_{2},\,k_{2}}(\bmu)\big\rangle = 0$ unless $M_1-k_1= M_2-k_2$, the lemma \ref{lem:6} yields an expression of the norm of the Bethe vectors.
\begin{corollary}
\label{norms}
The norms of the off-shell Bethe states $\Psi_{M}(\bmu)$ are given by
\begin{equation}
\label{NormBV}
\left\|\Psi_{M}(\bmu)\right\|^{2} =  \lvert\xi\lvert^{2M}\,\lvert\bmu\lvert^{2}(p_{M})^{2} + \sum_{k=0}^{M-1}\lvert\xi\lvert^{2k}  \left\|\phi_{M,k}(\bmu)\right\|^{2}
\end{equation}
where $\lvert\bmu\lvert^{2}=\lvert\mu_{1}\lvert^{2}\cdots\lvert\mu_{M}\lvert^{2}$ and $\phi_{M,k}(\bmu)$ are defined in \eqref{innerP1}.
\end{corollary}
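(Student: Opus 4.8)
The plan is to read off the corollary from Lemma \ref{lem:6} by specializing the inner-product formula \eqref{InnerPr} to the diagonal case $M_{1}=M_{2}=M$, $\bl=\bmu$, and then collapsing the resulting double sum to a single sum by an orthogonality argument. First I would substitute these choices into \eqref{InnerPr}. The leading term becomes $\bar{\xi}^{M}\xi^{M}\,p_{M}(\overline{\bmu})\,p_{M}(\bmu)$; since $\rho_{(gl)}=\sum_{a=1}^{N}\ell_{a}$ is a nonnegative integer, the scalar $p_{M}$ of \eqref{eq:4.11} is real, so with $p_{M}(\bmu)=\mu_{1}\cdots\mu_{M}\,p_{M}$ one obtains $p_{M}(\overline{\bmu})\,p_{M}(\bmu)=|\mu_{1}\cdots\mu_{M}|^{2}(p_{M})^{2}=|\bmu|^{2}(p_{M})^{2}$ and $\bar{\xi}^{M}\xi^{M}=|\xi|^{2M}$, which reproduces the first term $|\xi|^{2M}|\bmu|^{2}(p_{M})^{2}$ of \eqref{NormBV}.

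The main step is to justify the orthogonality relation $\langle\phi_{M,k_{1}}(\bmu)\lvert\phi_{M,k_{2}}(\bmu)\rangle=0$ for $k_{1}\neq k_{2}$ invoked just before the statement. By the definition \eqref{innerP1}, each $\phi_{M,k}(\bmu)$ is a linear combination of the $sl_{2}$-invariant Bethe vectors $\Psi_{M-k}(\cdot)_{0}$, each obtained by acting with $M-k$ copies of the undeformed lowering operator $\xm(\cdot)_{0}$ on $\O_{+}$. A short computation from \eqref{eq:09} together with the $sl_{2}$ relations \eqref{sl2} gives $[h_{(gl)},\xm(\mu)_{0}]=-2\,\xm(\mu)_{0}$, so $\xm(\mu)_{0}$ lowers the global Cartan weight by exactly two; hence $\Psi_{M-k}(\cdot)_{0}$, and therefore $\phi_{M,k}(\bmu)$, is a weight vector of $h_{(gl)}$ of weight $\rho_{(gl)}-2(M-k)$. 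Since $h_{a}^{*}=h_{a}$ at each site, $h_{(gl)}$ is self-adjoint for the natural Hermitian inner product on $\cH$, and distinct eigenspaces are mutually orthogonal. Thus $\phi_{M,k_{1}}(\bmu)$ and $\phi_{M,k_{2}}(\bmu)$ are orthogonal whenever $M-k_{1}\neq M-k_{2}$, i.e. whenever $k_{1}\neq k_{2}$.

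With this in hand, only the diagonal terms $k_{1}=k_{2}=k$ survive in the double sum of \eqref{InnerPr}, each contributing $\bar{\xi}^{k}\xi^{k}\langle\phi_{M,k}(\bmu)\lvert\phi_{M,k}(\bmu)\rangle=|\xi|^{2k}\|\phi_{M,k}(\bmu)\|^{2}$, and these assemble into $\sum_{k=0}^{M-1}|\xi|^{2k}\|\phi_{M,k}(\bmu)\|^{2}$, completing \eqref{NormBV}. Since Lemma \ref{lem:6} already does the heavy lifting of reducing the deformed inner product to $sl_{2}$-invariant data, the only genuine content here is the weight-space argument, and I expect that to be the main (and essentially the only) obstacle: it amounts to confirming that the $\xi=0$ lowering operators act homogeneously with respect to the global Cartan grading, so that the cross terms in the norm vanish.
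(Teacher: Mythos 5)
Your proposal is correct and follows essentially the same route as the paper: the paper also obtains the corollary by specializing \eqref{InnerPr} to $M_{1}=M_{2}=M$, $\bl=\bmu$ and invoking the orthogonality $\langle\phi_{M,k_{1}}\lvert\phi_{M,k_{2}}\rangle=0$ for $k_{1}\neq k_{2}$, which it states without proof just before the corollary. Your weight-space justification of that orthogonality via the self-adjoint $h_{(gl)}$ grading is exactly the intended (and correct) reason, so you have merely filled in a detail the paper leaves implicit.
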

The corollary above reveals the full strength of our approach in the fact that the expression of the norm of the off-shell Bethe states again involves only the $sl_{2}$-invariant inner products given in \cite{Sklyanin99}. 

\section{Conclusions} 

The Gaudin model based on the  r-matrix \eqref{defJordan} is studied.
The usual Gaudin realization of the model is introduced and the B-operators $B^{(k)}_{M}(\mu_{1},\ldots,\mu_{M})$ are defined as non-homogeneous polynomials of the operator $\xm(\l)$. These operators are symmetric functions of their arguments and they satisfy certain recursive relations with explicit dependency on the quasi-momenta $\mu_{1},\ldots,\mu_{M}$. The creation operators $B_{M}(\mu_{1},\ldots,\mu_{M})$ which yield the Bethe vectors 
form their proper subset. The commutator of the creation operators with the generating function of the Gaudin model under study is calculated explicitly. 
Based on the previous result the spectrum of the system is determined. It turns out 
that the spectrum of the system and the corresponding Bethe equations coincide with the ones of the $sl_{2}$-invariant model. However, contrary to the $sl_{2}$-invariant case, the generating function of integrals of motion and the corresponding Gaudin Hamiltonians are not Hermitian. The explicit form of the generalized Bethe vectors associated to the Jordan canonical form of the generating function $t(\l)$ remains an open problem.

In order to obtain expressions for the inner product of the Bethe vectors the operators $B^{\ast}_{M}(\bmu)$ are introduced. The method used in the study of the inner product of the Bethe vectors is based on the relation between the creation operators of the system and the ones in the $sl_2$-invariant case. The advantage of this approach is that the cumbersome calculation involving strings of commutators between the entries of the L-operator becomes unnecessary. The final expression of the inner products and norms of the Bethe states is given in terms of the corresponding ones of the $sl_{2}$-invariant Gaudin model. As opposed to the $sl_{2}$-invariant case, the Bethe vectors $\Psi_{M}(\bmu)$ for the deformed Gaudin model are not orthogonal for different $M$'s.

\section{Acknowledgments}
We acknowledge stimulating discussions with Professor P. P. Kulish and  remarks 
of the referee. This work was supported by the FCT projects PTDC/MAT/69635/2006 and PTDC/MAT/99880/2008.

\begin{appendix}
\appendixpage
\section{Proofs of lemmas}

The proofs of the lemmas, not included in the main text, are detailed in this appendix.\\

\noindent\textbf{Proof of the lemma \ref{lem:1}}\\
\noindent$\mathrm{i)}$ The recurrence relations \eqref{recrel} are evident from the expression \eqref{BMk} in the definition \ref{def:1}.\\ 
$\mathrm{ii)}$ Given the natural number $k$, and according to the definition \eqref{BMk} with $M=2$
\begin{equation}
\label{app1}
B^{(k)}_{2}(\mu_{1},\mu_{2}) = \xm(\mu_{1})\xm(\mu_{2}) + (k+1)\,\xi\mu_{2} \xm(\mu_{1}) + k\,\xi\mu_{1} \xm(\mu_{2}) + k(k+1)\xi^{2}\mu_{1}\mu_{2}.
\end{equation}
Using \eqref{loopalg}, it is straightforward to verify that $B^{(k)}_{2}(\mu_{1},\mu_{2}) = B^{(k)}_{2}(\mu_{2},\mu_{1})$.
Moreover, it is clear from \eqref{recrel} that
\begin{equation*}\begin{split}
\label{ }
B^{(k)}_{M}(\mu_1,\ldots,\mu_{i},\mu_{i+1},\ldots,\mu_{M})&=B^{(k)}_{i-1}(\mu_{1},\ldots,\mu_{i-1})B^{(k+i-1)}_{2}(\mu_{i},\mu_{i+1})B^{(k+i+1)}_{M-i-1}(\mu_{i+2},\ldots,\mu_{M})\\
&=B^{(k)}_{M}(\mu_1,\ldots,\mu_{i+1},\mu_{i},\ldots,\mu_{M}),
\end{split}\end{equation*}
therefore $B^{(k)}_{M}(\mu_1,\ldots,\mu_{M})$ is a symmetric function.

\noindent$\mathrm{iii)}$ Given the natural number $k$, a direct consequence of the expression \eqref{app1} is that
$$
B^{(k)}_{2}(\mu_{1},\mu_{2}) = B^{(k-1)}_{2}(\mu_{1},\mu_{2}) +\xi (\mu_{2}B^{(k)}_{1}(\mu_{1}) + \mu_{1}B^{(k)}_{1}(\mu_{2})).
$$
If the expression $\mathrm{iii)}$ of the lemma \ref{lem:1} is assumed to be true for some $M\geq 2$, we have, using \eqref{recrel}, 
\begin{align*}
&B^{(k)}_{M+1}(\bmu) = \left(B^{(k-1)}_{M}(\bmu^{(M+1)}) + \xi\sum_{i=1}^{M}\mu_{i}B^{(k)}_{M-1}(\bmu^{(i,M+1)})\right)\left(B^{(M+k-1)}_{1}(\mu_{M+1}) + \xi\mu_{M+1}\right) \\
&=B^{(k-1)}_{M+1}(\bmu) + \xi\mu_{M+1}\, B^{(k-1)}_{M}(\bmu^{(M+1)}) + \xi\,\sum_{i=1}^{M}\mu_{i}\left(B^{(k)}_{M}(\bmu^{(i)}) + \xi\mu_{M+1}\, B^{(k)}_{M-1}(\bmu^{(i,M+1)})\right)\\
&=B^{(k-1)}_{M+1}(\bmu) + \xi\,\sum_{i=1}^{M}\mu_{i}B^{(k)}_{M}(\bmu^{(i)}) + \xi\mu_{M+1} B^{(k)}_{M}(\bmu^{(M+1)}) = B^{(k-1)}_{M+1}(\bmu) + \xi\sum_{i=1}^{M+1}\mu_{i}B^{(k)}_{M}(\bmu^{(i)}).
\end{align*}

\noindent\textbf{Proof of the formula \eqref{eq:2.27} of the lemma \ref{lem:2}}.\\
Let  $k$ be a natural number. From the relations in \eqref{loopalg} and the definition \ref{def:1} it is immediately clear that 
\begin{equation}
\label{app2}
\left[h(\l),B^{(k)}_{1}(\mu)\right] = 2\frac{B^{(k)}_{1}(\l) - B^{(k)}_{1}(\mu)}{\l - \mu} + \xi\left( \betheop_{1}(\mu)-2k\right),
\end{equation}
Assume that the expression \eqref{eq:2.27} of the lemma \ref{lem:2} is true for some $M\geq 1$ then
\begin{align*}
&\left[h(\l),B^{(k)}_{M+1}(\bmu)\right] = \left[h(\l),B^{(k)}_{M}(\bmu^{(M+1)})\right]B_{1}^{(M+k)}(\mu_{M+1}) + B^{(k)}_{M}(\bmu^{(M+1)})[h(\l),B_{1}^{(M+k)}(\mu_{M+1})]\\
& = \sum_{i=1}^{M}\left(2\frac{B^{(k)}_{M+1}(\l\cup\bmu^{(i)})-B^{(k)}_{M+1}(\bmu)}{\l-\mu_{i}} +\xi B^{(k+1)}_{M}(\bmu^{(i)})\left(\mu_{i}\betheop_{M}(\mu_{i};\bmu^{(i,M+1)})-2k\right)\right)\\
& + \xi \sum_{i=1}^{M} \mu_{i} B^{(k+1)}_{M-1}(\bmu^{(i,M+1)})[h(\mu_{i}),B_{1}^{(M+k)}(\mu_{M+1})] + B^{(k)}_{M}(\bmu^{(M+1)})[h(\l),B_{1}^{(M+k)}(\mu_{M+1})]\\
& = 2\sum_{i=1}^{M+1}\frac{B^{(k)}_{M+1}(\l\cup\bmu^{(i)})-B^{(k)}_{M+1}(\bmu)}{\l-\mu_{i}} +\xi \sum_{i=1}^{M} B^{(k+1)}_{M}(\bmu^{(i)})\left(\mu_{i}\betheop_{M}(\mu_{i};\bmu^{(i,M+1)})+\frac{2\mu_{i}}{\mu_{M+1}-\mu_{i}}-2k\right)\\
&+ \xi B^{(k+1)}_{M}(\bmu^{(M+1)})\left(-2k+\sum_{i=1}^{M} \frac{2\mu_{M+1}}{\mu_{i}-\mu_{M+1}}\right) +\xi\left(B^{(k)}_{M}(\bmu^{(M+1)})+\xi\sum_{i=1}^{M}\mu_{i}B^{(k+1)}_{M-1}(\bmu^{(i,M+1)})\right)h(\mu_{M+1})\\
&= \sum_{i=1}^{M+1}\left(2\frac{B^{(k)}_{M+1}(\l\cup\bmu^{(i)})-B^{(k)}_{M+1}(\bmu)}{\l-\mu_{i}} + \xi B^{(k+1)}_{M}(\bmu^{(i)})\left(\mu_{i}\betheop_{M+1}(\mu_{i};\bmu^{(i)})-2k\right)\right),
\end{align*}
where the equation \eqref{app2} and the relation \eqref{recrel} of the lemma \ref{lem:1}, are used when appropriate.
\end{appendix}

The proofs of the formulas \eqref{eq:2.28} and \eqref{eq:2.29} are analogous to the demonstration above.

\end{document}